\newtheorem{theorem}{Theorem}[section]
\theoremstyle{definition}
\newtheorem{definition}[theorem]{Definition}
\newtheorem{remark}[theorem]{Remark}
\newtheorem{example}[theorem]{Example}
\newcommand*{\Reeb}{\mathcal{R}}
\newcommand{\R}{\ensuremath{\mathbb{R}}}
\newcommand{\Flder}{\rightarrow}
\newcommand{\proa}{A^*G \mbox{$\;$}_{\tau^*} \kern-3pt\times_\alpha
G \mbox{$\;$}_\beta \kern-3pt\times_{\tau^*} A^*G}
\newcommand{\al}{\mathfrak{g}}
\title{a}
\author{colombo2 }
\date{March 2023}
\begin{document}
	\title{Jacobi structure for dissipative mechanical systems on Lie Algebroids}
	\author{
		{\bf\large Alexandre Anahory  Simoes$^{1}$, Leonardo Colombo$^{2}$,}\\
		{\bf\large Manuel de León$^{3}$},{\bf\large Modesto Salgado$^{4}$ and Silvia Souto$^{5}$}\\ 
		{\it\small $^{1}$ IE School of Science and Technology, Madrid, Spain }\\
        {\it\small $^{2}$ Centre for Automation and Robotics (CSIC-UPM),}
        {\it\small Arganda del Rey, Spain}\\ 
        {\it\small$^{3}$Instituto de Ciencias Matematicas  (CSIC), Madrid, Spain}\\ {\it\small$^{3}$Real Academia de Ciencias, Madrid, Spain}  \\
        {\it\small $^{4, 5}$ Departamento de Matem\'aticas, Facultade de Matem\'aticas,} \\
        {\it\small Universidade de Santiago de Compostela, Spain} \\
        {\it\small $^{4, 5}$ Centro de Investigaci\'on y Tecnolog\'ia Matem\'atica de Galicia (CITMAga), Spain}\footnote{Corresponding author: $^{1}$ alexandre.anahory@ie.edu}}

\maketitle

\begin{abstract}
    We extend the Jacobi structure from $TQ\times \R$ and $T^{*}Q \times \R$ to $A\times \R$ and $A^{*}\times \R$, respectively, where $A$ is a Lie algebroid and $A^{*}$ carries the associated Poisson structure. We see that $A^*\times \R$ possesses a natural Jacobi structure from where we are able to model dissipative mechanical systems, generalizing previous models on $TQ\times \R$ and $\mathfrak{g}\times \R$.
    
    \textbf{Keywords}: Contact systems, Lie algebroids, Wong equations.
\end{abstract}

\section{Introduction}
 In the last years there has been a lot of applications of Lie algebroids in
theoretical physics and other related sciences, more precisely in Classical Mechanics, Classical Field Theory and their applications. The main
point is that Lie algebroids provide a general framework for systems with different features as systems with symmetries, systems over semidirect products,
Hamiltonian and Lagrangian systems, systems with constraints (nonholonomic and vakonomic), higher-order mechanics and optimal control \cite{colombo2,Pepin2007,PepinEduardo,colombo,CoLeMaMa,CoMa,januzkatrina,MdLMdDJCM,JCsolo, MaRoPa, Eduardofieldtheory, Eduardo, Eduardo1, Eduardoalg, Eduardoho, Ma2, Tom} among many others. 

In \cite{LMM} M. de Le\'on, J.C Marrero and E. Mart\'inez have
developed a Hamiltonian description for the mechanics on Lie
algebroids and they have shown that the dynamics is obtained solving
an equation for the Hamiltonian section (Hamiltonian vector field) in the same way than in Classical Mechanics 
(see also \cite{Eduardo1}). Moreover, for a Lie algebroid $A$, they have shown that the
Legendre transformation $\mathbb{F}L:A\to A^{*}$ associated to a 
Lagrangian $L:A\to\R$ induces a Lie algebroid morphism and when the
Lagrangian is regular both formalisms are equivalent. Marrero and collaborators also have studied non-holonomic mechanics on Lie algebroids \cite{CoLeMaMa}. In
other direction, in \cite{IMMS} D. Iglesias, J.C. Marrero, D.
Mart\'in de Diego and D. Sosa have studied singular Lagrangian
systems and vakonomic mechanics from the point of view of Lie
algebroids obtained through the application of a constrained
variational principle. 

Contact Hamiltonian and Lagrangian systems have deserved a lot of attention in recent years (see  \cite{anahory2021geometry}, \cite{Bravetti2017}, \cite{Bravetti2018}, \cite{deLeon2018}, and references therein).
One of the most relevant features of contact dynamics is the absence of conservative properties
contrarily to the conservative character of the energy in symplectic dynamics; indeed, 
we have a dissipative behavior.

The outline of the paper is the follwoing: in Section 2 we review the definition of the Hamiltonian equations in the canonical Poisson structure of $A^{*}$, where $A$ is a Lie algebroid. In Section 3, we introduce a Jacobi structure on $A^{*}\times \R$, from where we derive the equations of motion for a contact Hamiltonian system and use it to obtain the contact Lagrangian systems on the bundle $A\times \R$. In Section 4, we particularize our construction to the case of the Atiyah algebroid and we derive, from a geometric viewpoint, the contact Wong's equations. 

\section{Mechanics on Lie algebroids}

Let $\tau_{A}:A\rightarrow Q$ be a vector bundle  with base manifold $Q$ together with a fiber-preserving map $\rho:A\rightarrow TQ$ called the \textit{anchor} map. Let $[\cdot ,\cdot]_{A}:\Gamma(A)\times\Gamma(A)\rightarrow\Gamma(A)$ be a Lie bracket on the set of sections $\Gamma(A)$ (that is, a skew-symmetric bilinear map satisfying the Jacobi identity) satisfying the Leibniz rule
\begin{equation*}
[X ,fY]_{A}=f[X ,Y]_{A}+\rho(X)(f)Y, \ \ \text{for} \ X,Y\in \Gamma(A) \ \text{and} \ f\in C^{\infty}(Q).
\end{equation*}

The vector bundle $\tau_{A}:A\rightarrow Q$ equipped with the anchor map $\rho$ and the bracket of sections $[\cdot ,\cdot]_{A}$ is called a \textit{Lie algebroid}  (see \cite{Mack} for instance). 





Suppose that $(q^{i})$ are local coordinates on $Q$ and that $\{e_{a}\}$ is a local basis of sections of $A$. The local functions $C^{d}_{a b}$ and $\rho^{i}_{a}$ defined by
\begin{equation*}
[e_{a} , e_{b}]_{A}=C^{d}_{a b} e_{d}, \quad \rho(e_{a})=\rho^{i}_{a}\frac{\partial}{\partial q^{i}},
\end{equation*}
are called \textit{structure functions} of the Lie algebroid. In the following sections, we will also considered local coordinates $(q^{i},y^{a})$ on $A$, adapted to the local basis of sections $\{e_a\}$. 

We will see that the Lie algebroid structure on $A$ naturally induces a Poisson structure on its dual bundle $A^{*}$. First, we will review the definition of a linear Poisson structure on the vector bundle $\pi_{A^{*}}:A^{*}\rightarrow Q$.

\begin{definition}
	A \textit{linear Poisson structure} on $A^{*}$ is a bracket of functions $\{\cdot,\cdot\}_{A^{*}}:C^{\infty}(A^{*})\times C^{\infty}(A^{*}) \rightarrow C^{\infty}(A^{*})$ such that:
	\begin{enumerate}
		\item $\{\cdot,\cdot\}_{A^{*}}$ is a skew-symmetric bilinear map satisfying the Jacobi identity;
		\item $\{\cdot,\cdot\}_{A^{*}}$ satisfies the Leibniz rule
		\begin{equation*}
		\{f,g h\}_{A^{*}}=g	\{f, h\}_{A^{*}}+	\{f,g\}_{A^{*}}h, \quad f,g,h\in C^{\infty}(A^{*}).
		\end{equation*}
		\item $\{f,g\}_{A^{*}}$ is a fiberwise linear function if $f$ and $g$ are also fiberwise linear functions on $A^{*}$.
	\end{enumerate}
\end{definition}

The Poisson bracket is associated with a bi-vector $\Lambda_{A^{*}}$ on $A^{*}$ such that
\begin{equation*}
\{f,g\}_{A^{*}}=\Lambda_{A^{*}}(df,dg), \quad \forall f,g \in C^{\infty}(A^{*}).
\end{equation*}

There is a correspondence between sections of $A$ and fiberwise linear functions on $A^{*}$. If $X\in \Gamma(A)$ then the corresponding linear function is denoted by $\hat{X}:A^{*}\rightarrow \R$ and given by
\begin{equation*}
\hat{X}(\alpha_{q})=\langle\alpha_{q},X(q) \rangle,
\end{equation*}
for any $\alpha_{q}\in A^{*}_{q}$.

Now, the following theorem (see \cite{MdLMdDJCM} for the proof) contains the correspondence between Lie algebroid and linear Poisson structures.

\begin{theorem}
	There exists a one-to-one correspondence between Lie algebroid structures on the vector bundle $\tau_{A}:A\rightarrow Q$ and linear Poisson structures on the dual bundle $A^{*}$. The correspondence is determined by the relations
	\begin{equation*}
	\widehat{[X,Y]}_{A}=\{\hat{X},\hat{Y}\}_{A^{*}}, \quad \rho(X)(f)\circ \pi_{A^{*}}=\{\hat{X}, f\circ \pi_{A^{*}}\}_{A^{*}}
	\end{equation*}
	for $X, Y\in \Gamma(A)$ and $f\in C^{\infty}(Q)$.
\end{theorem}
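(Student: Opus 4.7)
The plan is to establish the correspondence in both directions by exploiting the fact that $C^\infty(A^*)$ is generated locally, as an algebra, by basic functions $f\circ\pi_{A^*}$ with $f\in C^\infty(Q)$ and fiberwise linear functions $\hat X$ with $X\in\Gamma(A)$. Skew-symmetry and the Leibniz rule then determine both a Poisson bracket and a Lie bracket from their values on generators, so in each direction it suffices to check that the prescribed assignments are consistent and satisfy Jacobi.

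\emph{Lie algebroid $\Rightarrow$ linear Poisson.} Given $(A,[\cdot,\cdot]_A,\rho)$, I would define, in local coordinates $(q^i,p_a)$ on $A^*$ adapted to a basis $\{e_a\}$ of $\Gamma(A)$, the bivector
\[
\Lambda_{A^*}=\rho^i_a(q)\,\partial_{q^i}\wedge\partial_{p_a}
-\tfrac{1}{2}\,C^{c}_{ab}(q)\,p_c\,\partial_{p_a}\wedge\partial_{p_b},
\]
and verify that its coordinate expression transforms correctly under change of local frame. The fact that $\{\cdot,\cdot\}_{A^*}:=\Lambda_{A^*}(d\cdot,d\cdot)$ is bilinear, skew and satisfies Leibniz is automatic, and linearity on linear functions is read off the formula. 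The key algebraic computation is the vanishing of the Schouten bracket $[\Lambda_{A^*},\Lambda_{A^*}]$, which (after expansion in coordinates) decomposes into the Jacobi identity for $[\cdot,\cdot]_A$ and the identity $\rho([e_a,e_b]_A)=[\rho(e_a),\rho(e_b)]$; the latter is extracted from the Leibniz rule of the algebroid bracket combined with its Jacobi identity. The two identities asserted in the theorem then follow by directly evaluating $\Lambda_{A^*}$ on the appropriate differentials.

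\emph{Linear Poisson $\Rightarrow$ Lie algebroid.} Given $\{\cdot,\cdot\}_{A^*}$, hypothesis~(3) says $\{\hat X,\hat Y\}_{A^*}$ is fiberwise linear, hence equals $\hat Z$ for a unique $Z\in\Gamma(A)$; I would define $[X,Y]_A:=Z$. For the anchor, I need $\{\hat X,f\circ\pi_{A^*}\}_{A^*}$ to be of the form $g\circ\pi_{A^*}$ for some $g\in C^\infty(Q)$, and then set $\rho(X)(f):=g$. Skew-symmetry and the Jacobi identity for $[\cdot,\cdot]_A$ are inherited directly from the corresponding properties of the Poisson bracket; the Leibniz rule $[X,fY]_A=f[X,Y]_A+\rho(X)(f)\,Y$ and the $C^\infty(Q)$-linearity of $\rho$ follow by expanding $\{\hat X,(f\circ\pi_{A^*})\hat Y\}_{A^*}$ and $\{(f\circ\pi_{A^*})\hat X,g\circ\pi_{A^*}\}_{A^*}$ via the Poisson Leibniz rule.

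\emph{Main obstacle.} The subtle step is showing that $\{\hat X,f\circ\pi_{A^*}\}_{A^*}$ is basic and that $\{f\circ\pi_{A^*},g\circ\pi_{A^*}\}_{A^*}=0$; these two facts are what make the anchor well-defined and what allow the Leibniz computations above to close up. The plan is to work in coordinates and demand that $\{X^a(q)p_a,Y^b(q)p_b\}_{A^*}$ be linear in $p$ for every choice of $X^a,Y^b\in C^\infty(Q)$. Expanding the bracket produces a sum of terms of fiber-degree $0$, $1$ and $2$, and matching the quadratic and constant pieces to zero forces the $\partial_{q^i}\wedge\partial_{q^j}$ coefficient of $\Lambda_{A^*}$ to vanish, the $\partial_{q^i}\wedge\partial_{p_a}$ coefficient to be $p$-independent, and the $\partial_{p_a}\wedge\partial_{p_b}$ coefficient to be linear in $p$. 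These three structural constraints not only imply the two properties above but also recover the explicit formula for $\Lambda_{A^*}$ used in the forward direction, showing that the two constructions are mutually inverse and closing the bijection.
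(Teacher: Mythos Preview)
The paper does not supply its own proof of this statement; it simply records the result and refers the reader to \cite{MdLMdDJCM}. So there is nothing in the paper to compare against, and your outline stands on its own as the standard argument for this classical fact.

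Your strategy in both directions is correct, and the Jacobi/Leibniz bookkeeping you describe goes through. The one place that needs tightening is the ``degree-matching'' step in the reverse direction. You write that expanding $\{X^a p_a, Y^b p_b\}_{A^*}$ yields terms of fibre-degree $0$, $1$ and $2$, and that killing the non-linear pieces forces the three blocks of $\Lambda_{A^*}$ to be zero, basic, and linear in $p$ respectively. As phrased this presupposes that the coefficients $\Lambda^{ij},\Lambda^{ia},\Lambda^{ab}$ already carry a well-defined fibre degree, which is precisely what you are trying to prove; a priori they are arbitrary smooth functions of $(q,p)$, so the bracket has no honest homogeneous decomposition to ``match''. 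The repair is routine and in the spirit of what you wrote: first take $X^a,Y^b$ constant to isolate $\Lambda^{ab}$ and force it to be fibrewise linear; then take one constant and one affine in $q$ to isolate $\Lambda^{ia}$ and force it to be basic (here one uses that $\Lambda^{ia}(q,p)\,p_c$ must be linear for \emph{every} index $c$, which pins $\Lambda^{ia}$ down off the zero section and hence everywhere by continuity); finally take both affine to kill $\Lambda^{ij}$. Equivalently and more invariantly, the linearity hypothesis says $\mathcal L_{\Delta^*}\Lambda_{A^*}=-\Lambda_{A^*}$ on the complement of the zero section (where differentials of fibrewise linear functions span the cotangent space), and smooth homogeneity of degree $-1$ then forces the coefficient blocks to have the degrees you asserted. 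With this in place, the rest of your argument closes up as written.
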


If $\{e_a\}$ is a local basis of sections for $A\rightarrow Q$, the dual basis $\{e^{a}\}$ is a local basis of sections for the dual bundle. In local coordinates $(q^{i},p_{a})$ on $A^{*}$, adapted to the dual basis $\{e^{a}\}$, the Poisson structure associated to a Lie algebroid $\tau_{A}:A\rightarrow Q$ is given by the relations
\begin{equation}
\{p_{a},p_{b}\}_{A^{*}}=C_{a b}^{d} p_{d}, \quad \{p_{a},q^{i}\}_{A^{*}}=\rho_{a}^{i}, \quad \{q^{i},q^{j}\}_{A^{*}}=0.
\end{equation}

Having fixed a Poisson structure we can provide Hamiltonian equations as follow. Let $h:A^{*}\rightarrow \R$ be a Hamiltonian function. Then the \textit{Hamiltonian vector field} is the vector field $X_{h}$ characterized by
\begin{equation*}
X_{h}(f)=\{ h,f \}_{A^{*}}, \quad \forall \ f \in C^{\infty}(A^{*}).
\end{equation*}
Alternatively, using the Poisson structure $\Lambda_{A^{*}}$,
\begin{equation*}
X_{h}=i_{dh} \ \Lambda_{A^{*}}.
\end{equation*}
Thus, in local coordinates,
\begin{equation*}
X_{h}=\rho_{a}^{i}\frac{\partial h}{\partial p_{a}}\frac{\partial}{\partial q^{i}}-\left( \rho_{a}^{i}\frac{\partial h}{\partial q^{i}}+C_{a b}^{d} p_{d}\frac{\partial h}{\partial p_{b}} \right)\frac{\partial}{\partial p_{a}}.
\end{equation*}
Therefore, the \emph{Hamilton equations} are
\begin{equation}\label{Hameq}
  \frac{d q^i}{d t}=\rho^i_{a} \frac{\partial h}{\partial
    p_{a}},\qquad \frac{d p_{a}}{d t}=- \rho^i_{a}
  \frac{\partial h}{\partial q^i}- p_{\gamma}
  C^{\gamma}_{ab} \frac{\partial h}{\partial p_{b}}.
\end{equation}


\begin{example}
Consider a \textit{tangent bundle} of a manifold $Q.$ The sections
of the bundle $\tau_{TQ}:TQ\to Q$ are the set of vector
fields on $Q$. The anchor map $\rho:TQ\to TQ$ is the identity
function and the Lie bracket defined on $\Gamma(\tau_{TQ})$ is
induced by the Lie bracket of vector fields on $Q.$

Note that in this case, Hamilton equations \eqref{Hameq} become in the usual Hamilton equations $$\dot{q}^{i}=\frac{\partial h}{\partial
p_i},\quad\dot{p}_{i}=-\frac{\partial h}{\partial q^{i}}.$$
\end{example}
\begin{example}
 Given a \textit{finite dimensional real Lie algebra} $\mathfrak{g}$ and 
$Q=\{q\}$ be a unique point, we consider the vector bundle
$\tau_{\mathfrak{g}}:\mathfrak{g}\to Q.$ The sections of this
bundle can be identified with the elements of $\mathfrak{g}$ and
therefore we can consider as the Lie bracket the structure of the
Lie algebra induced by $\mathfrak{g}$, and denoted by $[\cdot,\cdot]_{\mathfrak{g}}$. Since
$TQ=\{0\}$ one may consider the anchor map $\rho\equiv 0$. The triple
$(\mathfrak{g},[\cdot,\cdot]_{\mathfrak{g}},0)$ is a Lie algebroid
over a point.

Note that in this case, Hamilton equations \eqref{Hameq} become in the Lie-Poisson equations $$\dot{p}_{a} = \frac{d p_{a}}{d t}=- p_{\gamma}
  C^{\gamma}_{ab} \frac{\partial h}{\partial p_{b}}.$$
\end{example}

\section{Contact mechanical systems on Lie algebroids}

Let us recall first the definition of a Jacobi structure {(see \cite{Kirillov1976} and \cite{Lichnerowicz1978}).

\begin{definition}(Jacobi structure)
	A \textit{Jacobi structure} on a manifold $M$ is a pair $(\Lambda,E)$, where $\Lambda$ is a bi-vector field and $E$ is a vector field, satisfying the following equations
	\begin{equation*}
		[\Lambda,\Lambda]=2E\wedge \Lambda, \quad [E,\Lambda]=0,
	\end{equation*}
	with $[\cdot,\cdot]$ the Schouten-Nijenhuis bracket.
\end{definition}

\begin{definition}(Jacobi bracket)
	A Jacobi bracket on a manifold $M$ is a bilinear, skew-symmetric map $\{\cdot,\cdot\}:C^{\infty}(M)\times C^{\infty}(M) \rightarrow C^{\infty}(M)$ satisfying the Jacobi identity and the following weak Leibniz rule
	\begin{equation*}
		\text{supp}(\{f,g\})\subseteq \text{supp}(f)\cap \text{supp}(g).
	\end{equation*}
\end{definition}

A \textit{Jacobi manifold} is a manifold possessing either a Jacobi structure or a Jacobi bracket since these two definitions are equivalent (see \cite{Kirillov1976}, \cite{Lichnerowicz1978}, \cite{marle}, \cite{ibanez1997co}). However, it is much more convenient to introduce a Jacobi structure for practical purposes.

Now, from the Jacobi structure we can define an associated Jacobi bracket as follows: 
\[
\{f, g\}=\Lambda(df, dg)+f E(g)-g E(f), \quad f, g\in C^{\infty}(M, \R)
\]
In this case, the weak Leibniz rule is equivalent to the generalized Leibniz rule
\begin{equation}
\{f, gh\} = g\{f, h\} + h\{f, g\} + ghE(f),
\end{equation}
In this sense, this bracket generalizes the well-known Poisson brackets. Indeed, a Poisson manifold is a particular case of Jacobi manifold in which $E=0$. 

Given a Jacobi manifold $(M,\Lambda,E)$, we consider the map $\sharp_{\Lambda}:\Omega^{1}(M)\rightarrow \mathfrak{X}(M)$ defined by
\begin{equation*}
\sharp_{\Lambda}(\alpha)=\Lambda(\alpha,\cdot).
\end{equation*}
We have that $\sharp_{\Lambda}$ is a morphism of $C^{\infty}$-modules, though it may fail to be an isomorphism. Given a function $f: M \rightarrow \R$  we define the Hamiltonian vector field $X_f$ by
\[
X_f=\sharp_{\Lambda} (df) + f E.
\]

\subsection{Contact Hamiltonian equations on Lie algebroids}

Suppose that $\tau_{A}:A\rightarrow Q$ is a Lie algebroid and $\pi_{A^{*}}:A^{*}\rightarrow Q$ is its dual vector bundle equipped with the associated Poisson structure.

To formulate our contact Hamiltonian equations we will consider the vector bundle $\pi_{1}:A^{*}\times \R \rightarrow Q$ given by $\pi_{1}(\mu,z)=\pi_{A^{*}}(\mu)$, where $z$ is a global coordinate on $\R$, so that the following diagram commutes

\begin{figure}[htb!]
    \centering
    \begin{tikzcd}
A^{*}\times \mathbb{R} \arrow[rd, "\pi_{1}"'] \arrow[r, "\text{pr}_{1}"] & A^{*} \arrow[d, "\pi_{A^{*}}"] \\
  & Q  
\end{tikzcd}
\end{figure}

Next we will introduce two special vector fields: the \textit{Reeb vector field} on $A^{*}\times \R$ which is given in local coordinates by
\begin{equation*}
	\Reeb(q^{i},p_{a},z)=\frac{\partial}{\partial z}
\end{equation*}
and the \textit{Liouville vector field} on $A^{*}\times \R$ which is given by
\begin{equation*}
	\Delta^{*} (\mu_{q},z)=\left.\frac{d}{dt}\right|_{t=0}(\mu_{q}+t\mu_{q},z)
\end{equation*}
or, in local coordinates, by
\begin{equation*}
\Delta^{*} (q^{i},p_{a},z)=p_{a}\frac{\partial}{\partial p_{a}}.
\end{equation*}

\begin{theorem}
	Let $\Lambda_{A^{*}}$ be the Poisson structure on $A^{*}$ corresponding to the Lie algebroid structure on $A$. Then the pair $(\Lambda_{A^{*}\times \R}, E)$, where $\Lambda_{A^{*}\times \R}$ is the bi-vector defined by
	\begin{equation*}
	\Lambda_{A^{*}\times \R}=(\text{pr}_{1})^{*}\Lambda_{A^{*}}+\Delta^{*} \wedge \Reeb,
	\end{equation*}
	with $\text{pr}_{1}:A^{*}\times \R\rightarrow A^{*}$ the projection onto the first factor, and $E = -\Reeb$ is a \textit{Jacobi structure} on $A^{*}\times \R$.
\end{theorem}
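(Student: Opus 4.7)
The plan is to verify directly the two Schouten--Nijenhuis conditions defining a Jacobi structure, namely $[\Lambda_{A^{*}\times\R},\Lambda_{A^{*}\times\R}]=2E\wedge\Lambda_{A^{*}\times\R}$ and $[E,\Lambda_{A^{*}\times\R}]=0$. Since the bivector splits as a sum of two pieces, the strategy is to first compute the Schouten brackets of the building blocks $\text{pr}_{1}^{*}\Lambda_{A^{*}}$, $\Delta^{*}$ and $\Reeb$, and then reassemble by bilinearity and the graded Leibniz rule.

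At the building-block level I would record the following inputs. First, $[\text{pr}_{1}^{*}\Lambda_{A^{*}},\text{pr}_{1}^{*}\Lambda_{A^{*}}]=0$ because $\Lambda_{A^{*}}$ is Poisson (as guaranteed by the theorem relating Lie algebroid structures on $A$ with linear Poisson structures on $A^{*}$). Second, $[\text{pr}_{1}^{*}\Lambda_{A^{*}},\Reeb]=\mathcal{L}_{\Reeb}\,\text{pr}_{1}^{*}\Lambda_{A^{*}}=0$, since in the natural coordinates $\text{pr}_{1}^{*}\Lambda_{A^{*}}$ has no dependence on $z$ and no $\partial/\partial z$ components. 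Third, $[\Delta^{*},\Reeb]=0$, immediate from $\Delta^{*}=p_{a}\partial/\partial p_{a}$ and $\Reeb=\partial/\partial z$. The key nontrivial identity is
\[
\mathcal{L}_{\Delta^{*}}\,\text{pr}_{1}^{*}\Lambda_{A^{*}}=-\text{pr}_{1}^{*}\Lambda_{A^{*}},
\]
which I would justify by applying $\mathcal{L}_{\Delta^{*}}$ termwise to the local expression $\Lambda_{A^{*}}=-\rho^{i}_{a}\,\partial_{q^{i}}\wedge\partial_{p_{a}}+\tfrac12 C^{d}_{ab}p_{d}\,\partial_{p_{a}}\wedge\partial_{p_{b}}$ derived from the bracket relations $(1)$; in essence this is the Euler identity expressing that $\Lambda_{A^{*}}$ is linear in the fibre coordinates, so it is homogeneous of degree one under fibre dilation.

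With those inputs, condition (1) unfolds cleanly. Expanding bilinearly,
\[
[\Lambda_{A^{*}\times\R},\Lambda_{A^{*}\times\R}]=[\text{pr}_{1}^{*}\Lambda_{A^{*}},\text{pr}_{1}^{*}\Lambda_{A^{*}}]+2[\text{pr}_{1}^{*}\Lambda_{A^{*}},\Delta^{*}\wedge\Reeb]+[\Delta^{*}\wedge\Reeb,\Delta^{*}\wedge\Reeb].
\]
The first summand vanishes. For the middle summand, the graded Leibniz rule yields $[\text{pr}_{1}^{*}\Lambda_{A^{*}},\Delta^{*}\wedge\Reeb]=[\text{pr}_{1}^{*}\Lambda_{A^{*}},\Delta^{*}]\wedge\Reeb-\Delta^{*}\wedge[\text{pr}_{1}^{*}\Lambda_{A^{*}},\Reeb]$, and the two inner brackets are controlled by the identities above, leaving (up to the global Schouten sign) $\text{pr}_{1}^{*}\Lambda_{A^{*}}\wedge\Reeb$. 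The third summand vanishes by the general formula $[X\wedge Y,X\wedge Y]=-2X\wedge[X,Y]\wedge Y$ applied to the commuting pair $X=\Delta^{*}$, $Y=\Reeb$. On the other side, using $E=-\Reeb$ together with $\Reeb\wedge\Delta^{*}\wedge\Reeb=0$ and the graded commutativity $\Reeb\wedge\text{pr}_{1}^{*}\Lambda_{A^{*}}=\text{pr}_{1}^{*}\Lambda_{A^{*}}\wedge\Reeb$, one obtains $2E\wedge\Lambda_{A^{*}\times\R}$ as the same multiple of $\text{pr}_{1}^{*}\Lambda_{A^{*}}\wedge\Reeb$, closing the identity.

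Condition (2) is straightforward: both pieces of $\Lambda_{A^{*}\times\R}$ are $\Reeb$-invariant. The first, $\text{pr}_{1}^{*}\Lambda_{A^{*}}$, has no $z$-dependence, and $\mathcal{L}_{\Reeb}(\Delta^{*}\wedge\Reeb)=[\Reeb,\Delta^{*}]\wedge\Reeb+\Delta^{*}\wedge[\Reeb,\Reeb]=0$, whence $[E,\Lambda_{A^{*}\times\R}]=\pm\mathcal{L}_{\Reeb}\Lambda_{A^{*}\times\R}=0$. I expect the only real obstacle to be careful sign bookkeeping in the graded Leibniz rule and the graded commutativity of the exterior product; the conceptual content is concentrated in the Euler-type identity $\mathcal{L}_{\Delta^{*}}\Lambda_{A^{*}}=-\Lambda_{A^{*}}$, which is precisely the statement that the Poisson structure on $A^{*}$ is linear.
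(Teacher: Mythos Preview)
Your proposal is correct and follows essentially the same route as the paper's proof: decompose $\Lambda_{A^{*}\times\R}$, exploit that $\Lambda_{A^{*}}$ is Poisson and $z$-independent, and reduce the first Jacobi identity to the single nontrivial input $[\Lambda_{A^{*}},\Delta^{*}]=\Lambda_{A^{*}}$ (equivalently, $\mathcal{L}_{\Delta^{*}}\Lambda_{A^{*}}=-\Lambda_{A^{*}}$), while the second identity follows from $\Reeb$-invariance of both summands. You are in fact a bit more explicit than the paper---you spell out why $[\text{pr}_{1}^{*}\Lambda_{A^{*}},\text{pr}_{1}^{*}\Lambda_{A^{*}}]$ and $[\Delta^{*}\wedge\Reeb,\Delta^{*}\wedge\Reeb]$ vanish, and you correctly flag that the homogeneity identity is exactly the linearity of the Poisson structure on $A^{*}$---but the architecture is the same.
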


\begin{proof}
	Using the fact that
	\begin{equation*}
		\Lambda= \text{pr}_{1}^{*}\Lambda_{A^{*}} +\Delta^{*} \wedge \Reeb,
	\end{equation*}
	where $\Lambda_{A^{*}}$ is the Poisson structure on $A^{*}$, $\text{pr}_{1}:A^{*}\times \R \rightarrow A^{*}$ is the projection onto the first factor, $\Delta^{*}$ is the Liouville vector field and $\Reeb=\frac{\partial}{\partial z}$, we may deduce after some computations involving the Schouten-Nijenhuis bracket and interior products that (see \cite{marle})
	\begin{equation*}
		[\Lambda,\Lambda] = 2[\Lambda_{A^{*}},\Delta^{*} \wedge \Reeb] =2\Lambda_{A^{*}}\wedge \Reeb,
	\end{equation*}
	where we used that $[\Delta^{*} \wedge \Reeb, \Lambda_{A^{*}}]=[\Lambda_{A^{*}}, \Delta^{*} \wedge \Reeb]$ (since $\Lambda_{A^{*}}$ and $\Delta^{*} \wedge \Reeb$ are both $(2,0)$-tensors) and the fact that $[\Lambda_{A^{*}},\Delta^{*}]=\Lambda_{A^{*}}$. The previous equality is equivalent to
	\begin{equation*}
		[\Lambda,\Lambda] = 2\Lambda\wedge \Reeb,
	\end{equation*}
    using linearity and skew-symmetry of the wedge product.

In addition, we also have that $[\Reeb,\Lambda]$ vanishes:
 $$[\Reeb,\Lambda] = [\Reeb, \Lambda_{A^{*}}] + [\Reeb, \Delta^{*} \wedge \Reeb].$$
 the first term vanishes since $\Lambda_{A^{*}}$ is pulled-back from $A^{*}$ and so $[\Reeb, \Lambda_{A^{*}0}] = \mathcal{L}_{\Reeb} \Lambda_{A^{*}}=0$. The second term also vanishes since $[\Reeb, \Delta^{*} \wedge \Reeb] = [\Reeb, \Delta^{*}]\wedge \Reeb + \Delta^{*} \wedge [\Reeb, \Reeb] = [\Reeb, \Delta^{*}]\wedge \Reeb$ and the Lie bracket of $[\Reeb, \Delta^{*}]$ is zero.
Hence, $(\Lambda,E=-\Reeb)$ is indeed a Jacobi structure.
\end{proof}

\begin{remark}
    The Jacobi structure proposed in the previous theorem is a particular case of the Jacobi structure introduced in \cite{iglesias2000some, iglesias2001some} using the cocycle $\phi=(0,1)\in \Gamma(A^{*}\times \R)$ in their construction.
\end{remark}

The bi-vector $\Lambda_{A^{*}\times \R}$ naturally generates a Jacobi bracket of functions following the usual definition
\begin{equation}
	\{f,g\}_{A^{*}\times \R}=\Lambda_{A^{*}\times \R}(df, dg) - f\Reeb(g) + g\Reeb(f), \quad f,g \in C^{\infty}(A^{*}\times \R).
\end{equation}

In local coordinates, we deduce
\begin{equation}
	\begin{split}
		& \{p_{a},p_{b}\}_{A^{*}\times \R}=C_{a b}^{d}p_{d}, \quad \{p_{a},q^{i}\}_{A^{*}\times \R}=\rho^{i}_{a} \\
		& \{q^{i},q^{j}\}_{A^{*}\times \R}=0, \quad \{p_{a},z\}_{A^{*}\times \R}=0 \quad \{q^{i}, z\}_{A^{*}\times \R}=-q^{i}.
	\end{split}	
\end{equation}

\begin{definition}
	Given a Hamiltonian function $h:A^{*}\times \R \rightarrow \R$, the \textit{contact Hamiltonian vector field} is given by the relation
	\begin{equation}
		X_{h}(f)=\{h,f\}_{A^{*}\times \R}-fR(h), \quad \forall f \in C^{\infty}(A^{*}\times \R).
	\end{equation}
\end{definition}

Hence, the local expression of $X_{h}$ is
\begin{equation}
	X_{h}=\rho_{a}^{i}\frac{\partial h}{\partial p_{a}}\frac{\partial}{\partial q^{i}}-\left( \rho_{a}^{i}\frac{\partial h}{\partial q^{i}}+C_{a b}^{d} p_{d}\frac{\partial h}{\partial p_{b}}+ p_{a}\frac{\partial h}{\partial z} \right)\frac{\partial}{\partial p_{a}}+\left( p_{a}\frac{\partial h}{\partial p_{a}}-h \right)\frac{\partial}{\partial z}.
\end{equation}

\begin{example}
	When $A=TQ$ is equipped with the Lie brackets and the anchor map is just the identity, then the Jacobi structure is the canonical one in $T^{*}Q\times \R$. In that case, we recover the contact Hamiltonian equations in \cite{deLeon2018} 

$$\dot{q}^i=\frac{\partial h}{\partial p_{a}},\,\,\,\dot{p}_a=\frac{\partial h}{\partial q^{i}}-p_{a}\frac{\partial h}{\partial z},\,\,\,\dot{z}=p_{a}\frac{\partial h}{\partial p_{a}}-h.$$ 

\end{example}

\begin{example}
	When $A$ is a Lie algebra, say $A=\mathfrak{g}$, we find that the Hamiltonian vector field on $\mathfrak{g}^{*}\times \R$ is just
	\begin{equation}
		X_{h}=-\left( C_{a b}^{d} p_{d}\frac{\partial h}{\partial p_{b}}+ p_{a}\frac{\partial h}{\partial z} \right)\frac{\partial}{\partial p_{a}}+\left( p_{a}\frac{\partial h}{\partial p_{a}}-h \right)\frac{\partial}{\partial z}.
	\end{equation} which gives rise to the Lie-Poisson-Jacobi equations \cite{contactreduction}

 $$\dot{p}_a=-C_{a b}^{d} p_{d}\frac{\partial h}{\partial p_{b}}- p_{a}\frac{\partial h}{\partial z},\,\,\,\dot{z}=p_{a}\frac{\partial h}{\partial p_{a}}-h.$$
\end{example}

\subsection{Herglotz equations on Lie algebroids}

The Lagrangian function $l:A\times \R \rightarrow \R$ is said to be \textit{regular} if the fiber derivative map given by
\begin{equation}
	\begin{split}
		\mathbb{F} l: A \times \R & \rightarrow A^{*}\times \R \\
		(\alpha_{q},z) & \mapsto (\mu_{q}(\alpha_{q},z),z),
	\end{split}
\end{equation}
where
\begin{equation}
	\langle \mu_{q}(\alpha_{q},z), \Omega_{q} \rangle=\left. \frac{d}{dt} \right|_{t=0} l (\alpha_{q}+t\Omega_{q},z), \quad \forall \ \Omega_{q}\in A_{q}
\end{equation}
is a diffeomorphism.

We also define the \textit{Liouville vector field} on $A\times \R$ to be
\begin{equation*}
\Delta (\alpha_{q},z)=\left.\frac{d}{dt}\right|_{t=0}(\alpha_{q}+t\alpha_{q},z)
\end{equation*}
or, in local coordinates, by
\begin{equation*}
\Delta (q^{i},y^{a},z)=y^{a}\frac{\partial}{\partial y^{a}}.
\end{equation*}

The \textit{Lagrangian energy} is the function $E_{l}:A\times \R\rightarrow \R$ given by
\begin{equation}\label{lagrangian:energy}
	E_{l}(\alpha_{q},z)=\Delta(\alpha_{q},z)(l)-l(\alpha_{q},z),
\end{equation}
whose local expression is
$$E_{l}(q^{i},y^{a},z) = y^{a}\frac{\partial l}{\partial y^{a}} - l.$$

\begin{theorem}
	If $l:A\times \R \rightarrow \R$ is a regular contact Lagrangian function and $h:A^{*}\times \R \rightarrow \R$ is the contact Hamiltonian function defined by $h=E_{l}\circ (\mathbb{F} l)^{-1}$, then the curve $(\mu,z):I\rightarrow A^{*}\times \R$ is an integral curve of the Hamiltonian vector field $X_{h}$ if and only if the curve $(\alpha,z)=(\mathbb{F} l)^{-1} \circ (\mu,z)$ satisfies the Herglotz equations
	\begin{equation}\label{eqalgebroids}
		\begin{split}
			& \frac{d}{dt}\frac{\partial l}{\partial y^{a}}-\rho_{a}^{i}\frac{\partial l}{\partial q^{i}}+C_{a b}^{d}y^{b}\frac{\partial l}{\partial y^{d}}=\frac{\partial l}{\partial z}\frac{\partial l}{\partial y^{a}}, \\
                & \dot{q}^{i} = \rho^{i}_{a}y^{a} \\
			& \dot{z}=l.
		\end{split}		
	\end{equation}
\end{theorem}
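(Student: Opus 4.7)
The plan is to reduce the theorem to a coordinate computation that matches the Herglotz equations term by term, using the contact Hamiltonian vector field already obtained in the previous subsection. Since regularity of $l$ ensures $\mathbb{F}l$ is a diffeomorphism, working in adapted local coordinates is harmless: I will write $\mathbb{F}l$ as $p_a = \partial l/\partial y^a$ (with $q^i$ and $z$ unchanged), and use the energy formula \eqref{lagrangian:energy} which locally reads $E_l = y^a \partial l/\partial y^a - l$, so that $h(q,p,z) = y^a(q,p,z)\, p_a - l(q,y(q,p,z),z)$.

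First, I would establish the standard Legendre identities on the fiber derivative. Differentiating $h = y^a p_a - l$ and using $p_a = \partial l/\partial y^a$ to cancel the terms in $\partial y^a/\partial p_b$, $\partial y^a/\partial q^i$, $\partial y^a/\partial z$, one obtains
\begin{equation*}
\frac{\partial h}{\partial p_a} = y^a, \qquad \frac{\partial h}{\partial q^i} = -\frac{\partial l}{\partial q^i}, \qquad \frac{\partial h}{\partial z} = -\frac{\partial l}{\partial z}.
\end{equation*}
These are the only non-trivial identities needed from the Legendre transformation.

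Next, I substitute these identities into the three components of the contact Hamilton equations coming from $X_h$ in the previous subsection. The $\dot q^i$-component directly gives $\dot q^i = \rho^i_a y^a$, and the $\dot z$-component gives $\dot z = p_a y^a - h = l$ by the definition of $h$. For the $\dot p_a$-component, using $p_d = \partial l/\partial y^d$ and the identities above, the equation $\dot p_a = -\rho^i_a \partial h/\partial q^i - C^d_{ab} p_d\, \partial h/\partial p_b - p_a \partial h/\partial z$ becomes
\begin{equation*}
\frac{d}{dt}\frac{\partial l}{\partial y^a} = \rho^i_a \frac{\partial l}{\partial q^i} - C^d_{ab}\, y^b \frac{\partial l}{\partial y^d} + \frac{\partial l}{\partial y^a}\frac{\partial l}{\partial z},
\end{equation*}
which rearranges to the first Herglotz equation in \eqref{eqalgebroids}. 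Because $\mathbb{F}l$ is a diffeomorphism, each step is reversible, giving the ``if and only if''.

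The main obstacle is purely the careful bookkeeping of the Legendre identities (in particular the cancellation that produces $\partial h/\partial p_a = y^a$ and the sign in $\partial h/\partial z = -\partial l/\partial z$); once these are in hand, the structure functions $\rho^i_a$ and $C^d_{ab}$ carry over verbatim from the Hamiltonian to the Lagrangian side. A brief remark explaining the intrinsic meaning of the computation, namely that the Jacobi structure on $A^* \times \R$ pulls back via $\mathbb{F}l$ to an equivalent contact Lagrangian structure on $A \times \R$, would round out the argument but is not needed for the proof itself.
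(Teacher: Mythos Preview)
Your proof is correct and more direct than the paper's. You substitute the standard Legendre identities $\partial h/\partial p_a = y^a$, $\partial h/\partial q^i = -\partial l/\partial q^i$, $\partial h/\partial z = -\partial l/\partial z$ straight into the coordinate expression of $X_h$ already derived in the preceding subsection, and read off the Herglotz equations. The paper instead pulls back the entire Jacobi structure $(\Lambda_{A^*\times\R},-\Reeb)$ along $\mathbb{F}l$ to $A\times\R$, writes down the induced Jacobi brackets of the functions $q^i$, $\partial l/\partial y^a$, $z$, and then computes $\dot q^i$, $\frac{d}{dt}\frac{\partial l}{\partial y^a}$, $\dot z$ as $\{E_l,\cdot\}-(\cdot)R(E_l)$ via repeated applications of the generalized Leibniz rule. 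Amusingly, the geometric remark you dismiss as optional at the end is precisely the route the paper takes as its main argument. Your approach is shorter and requires only that $X_h$ has already been written in coordinates; the paper's approach, while heavier, makes explicit that the Lagrangian dynamics is governed by a genuine Jacobi structure on $A\times\R$, which is conceptually useful if one later wants to work intrinsically on the Lagrangian side.
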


\begin{proof}
    Consider local coordinates in $A\times \mathbb{R}$ and $A^{*}\times \mathbb{R}$ given by $(q^{i},y^{a},z)$ and $(q^{i},p_{a}, z)$, respectively. In theses coordinates the fiber derivative has the local expression
    $$\mathbb{F}l(q^{i},y^{a},z) = \left(q^{i},\frac{\partial l}{\partial y^{a}},z\right).$$

    Also, suppose that the image of the curve $(q^{i}(t), y^{a}(t), z(t))$ under $\mathbb{F}l$ is the curve $(q^{i}(t),p_{a}(t), z(t)$ and that the former satisfies the contact Hamiltonian equations for the function $h$.

    Let us pullback the Jacobi structure on $A^{*}\times \R$ to the Lie algebroid $A\times \R$ using the fiber derivative. The Jacobi bracket $\{\cdot, \cdot\}$ of the pullback structure is determined by the relations
   \begin{equation*}
        \begin{split}
            & \{ \frac{\partial l}{\partial y^{a}},\frac{\partial l}{\partial y^{b}} \} = C_{a b}^{d}\frac{\partial l}{\partial y^{d}}, \quad \{ \frac{\partial l}{\partial y^{a}}, q^{i} \} = \rho_{a}^{i} \\
            & \{q^{i}, q^{j} \}= 0 \quad \{ \frac{\partial l}{\partial y^{a}}, z \} = 0, \quad \{q^{i}, z \}= -q^{i}.
        \end{split}
    \end{equation*}
    and the corresponding Reeb vector field $R$ is the pullback of $\Reeb = \displaystyle{\frac{\partial}{\partial z}}$ on $A^{*}\times \R$. Note that $R(q^{i})=\Reeb(q^{i}) = 0$, $R(\frac{\partial l}{\partial y^{a}}) = \Reeb(p_{a}) = 0$ and $R(z)=\Reeb(z)=1$.
    
    Next, we will compute the time derivatives $\dot{q}^{i}$, $\dot{z}$ and $\frac{d}{dt}\frac{\partial l}{\partial y^{a}}$ along the curve $(q^{i}(t), y^{a}(t), z(t))$. In the first place, we have that
    $$\dot{q}^{i} = \{ E_{l}, q^{i} \} - q^{i}R(E_{l})$$
    We have that
    $$ \{ E_{l}, q^{i} \} = \{ y^{a}\frac{\partial l}{\partial y^{a}} - l, q^{i} \} = y^{a}\{\frac{\partial l}{\partial y^{a}} , q^{i} \} + \frac{\partial l}{\partial y^{a}} \{y^{a}, q^{i} \} + y^{a}\frac{\partial l}{\partial y^{a}} R(q^{i}) - \{l, q^{i}\}$$
    and
    $$R(E_{l}) = \frac{\partial l}{\partial y^{a}} R(y^{a}) - R(l).$$

    Now, from the definition of Jacobi bracket of two functions in terms of the Jacobi structure---together with the local expression of the Jacobi brackets of coordinate functions---we deduce that $$\{l,q^{i}\} = \{y^{a},q^{i}\}\frac{\partial l}{\partial y^{a}} - q^{i}\frac{\partial l}{\partial y^{a}}R(y^{a}) + q^{i}R(l). $$ Thus, we have that
    $$\dot{q}^{i} = \rho_{a}^{i} y^{a}.$$

    Now, we also have that
    $$\frac{d}{dt}\frac{\partial l}{\partial y^{a}} = \{ E_{l}, \frac{\partial l}{\partial y^{a}} \}-\frac{\partial l}{\partial y^{a}}R(E_{l}) $$
    Analogously,
    $$\{ E_{l}, \frac{\partial l}{\partial y^{a}} \} = y^{b}\{\frac{\partial l}{\partial y^{b}} , \frac{\partial l}{\partial y^{a}} \} + \frac{\partial l}{\partial y^{b}} \{y^{b}, \frac{\partial l}{\partial y^{a}} \} + y^{b}\frac{\partial l}{\partial y^{b}} R(\frac{\partial l}{\partial y^{a}}) - \{l, \frac{\partial l}{\partial y^{a}}\}$$
    Using the same reasoning as before,
  $$\{l, \frac{\partial l}{\partial y^{a}}\} = -\rho_{a}^{i}\frac{\partial l}{\partial q^{i}} + \{y^{b}, \frac{\partial l}{\partial y^{a}}\}\frac{\partial l}{\partial y^{b}} - \frac{\partial l}{\partial y^{a}}\frac{\partial l}{\partial y^{b}} R(y^{b}) -\frac{\partial l}{\partial y^{a}}\frac{\partial l}{\partial z} + \frac{\partial l}{\partial y^{a}}R(l)$$
    Hence,
    \begin{equation*}
        \begin{split}
            \frac{d}{dt}\frac{\partial l}{\partial y^{a}} = -C_{a b}^{d}y^{b}\frac{\partial l}{\partial y^{d}}  + \rho_{a}^{i}\frac{\partial l}{\partial q^{i}} + \frac{\partial l}{\partial y^{a}}\frac{\partial l}{\partial z}.
        \end{split}
    \end{equation*}

    Finally, $$\dot{z} = \{E_{l}, z\} - z R(E_{l}).$$
    Therefore,
    $$\{E_{l}, z\} = y^{b}\{\frac{\partial l}{\partial y^{b}} , z \} + \frac{\partial l}{\partial y^{b}} \{y^{b}, z \} + y^{b}\frac{\partial l}{\partial y^{b}} R(z) - \{l, z\}$$
    In addition, from
   $$\{l, z\} =  \{y^{b}, z\}\frac{\partial l}{\partial y^{b}} + y^{b}\frac{\partial l}{\partial y^{b}} - z\frac{\partial l}{\partial y^{b}}R(y^{b})  + zR(l) - l,$$
    we conclude that
 $$\dot{z} = l.$$
    which finishes the proof.
    
\end{proof}
\begin{remark}
Note that in the case of the Lie alebroid $A=TQ$, equations \eqref{eqalgebroids} are just Herglotz equations \cite{anahory2021geometry, Bravetti2017, deLeon2018, dLLV} 
\begin{equation*}
		\begin{split}
			& \frac{d}{dt}\frac{\partial l}{\partial \dot{q}^{a}}-\frac{\partial l}{\partial q^{a}}=\frac{\partial l}{\partial z}\frac{\partial l}{\partial \dot{q}^{a}}, \\
			& \dot{z}=l. 
		\end{split}		
	\end{equation*} Moreover, in the case of the Lie algebroid $A=\mathfrak{g}$, equations \eqref{eqalgebroids} are the Euler-Poincar\'e-Herglotz equations \cite{contactreduction} \begin{equation*}\
		\begin{split}
			& \frac{d}{dt}\frac{\partial l}{\partial y^{a}}+C_{a b}^{d}y^{b}\frac{\partial l}{\partial y^{d}}=\frac{\partial l}{\partial z}\frac{\partial l}{\partial y^{a}}, \\
			& \dot{z}=l.
		\end{split}		
	\end{equation*}
\end{remark}

\section{Example: Atiyah algebroid}\label{Atiyah case}
Let $G$ be a Lie group and we assume that $G$ acts freely and properly on $Q$. We
denote by $\pi:Q\Flder \widehat{Q}=Q/G$ the associated principal
bundle. The tangent lift of the action gives a free and proper
action of $G$ on $TQ$ and $\widehat{TQ}=TQ/G$ is a quotient
manifold. The quotient vector bundle
$\tau_{\widehat{TQ}}:\widehat{TQ}\Flder \widehat{Q}$ where
$\tau_{\widehat{TQ}}([v_q])=\pi(q)$ is a Lie algebroid over
$\widehat{Q}.$ The fiber of $\widehat{TQ}$ over a point $\pi(q)\in \widehat{Q}$ is isomorphic to $T_{q}Q.$

The Lie bracket is defined on the space
$\Gamma(\tau_{\widehat{TQ}})$ which is isomorphic to the Lie
subalgebra of $G$-invariant vector fields, that is,
$$\Gamma(\tau_{\widehat{TQ}})=\{X\in\mathfrak{X}(Q)\mid
X \hbox{ is $G$-invariant}\}.$$ Thus, the Lie bracket on
$\widehat{TQ}$ is the bracket of $G$-invariant vector fields.
The anchor map $\rho:\widehat{TQ}\Flder T\widehat{Q}$ is given by
$\rho([v_q])=T_{q}\pi(v_q).$ Moreover, $\rho$ is a Lie algebra
homomorpishm satisfying the compatibility condition since the
$G$-invariant vector fields are $\pi$-projectable. This Lie
algebroid is called \textit{Lie-Atiyah algebroid} associated with
the principal bundle $\pi:Q\Flder\widehat{Q}.$

Let $\mathcal{A}:TQ\to\al$ be a principal connection in the principal bundle $\pi:Q\to\widehat{Q}$ and $B:TQ\oplus TQ\to\al$ be the curvature of $\mathcal{A}.$ The connection determines an isomorphism $\alpha_{\mathcal{A}}$ between the vector bundles
$\widehat{TQ}\to\widehat{Q}$ and $T\widehat{Q}\oplus\widetilde{\al}\to \widehat{Q}$, where $\widetilde{\al}=(Q\times\al)/G$ is the adjoint bundle associated with the principal bundle $\pi:Q\to\widehat{Q}$ (see \cite{CeMaRa} for example).

We choose a local trivialization of the principal bundle
$\pi:Q\to\widehat{Q}$ to be $U\times G,$ where $U$ is an open subset
of $\widehat{Q}.$ Suppose that $e$ is the identity of $G$, $(q^{i})$
are local coordinates on $U$ and $\{\xi_{A}\}$ is a basis of $\al.$

Denote by $\{\overleftarrow{\xi_{A}}\}$ the corresponding
left-invariant vector field on $G$, that is,
$$\overleftarrow{\xi_{A}}(g)=(T_{e}L_{g})(\xi_{A})$$ for $g\in G$ where
$L_{g}:G\to G$ is the left-translation on $G$ by $g.$ If
$$\mathcal{A}\left(\frac{\partial}{\partial
q^{i}}\Big{|}_{(q,e)}\right)=\mathcal{A}_{i}^{A}(q)\xi_{A},\quad\mathcal{B}\left(\frac{\partial}{\partial
q^{i}}\Big{|}_{(q,e)},\frac{\partial}{\partial
q^{j}}\Big{|}_{(q,e)}\right)=\mathcal{B}_{ij}^{A}(q)\xi_{A},$$ for
$i,j\in\{1,\ldots,m\}$ and $q\in U,$ then the horizontal lift of the
vector field $\displaystyle{\frac{\partial}{\partial q^{i}}}$ is the vector field
on $\pi^{-1}(U)\simeq U\times G$ given by
$$\left(\frac{\partial}{\partial
q^{i}}\right)^{h}=\frac{\partial}{\partial
q^{i}}-\mathcal{A}_{i}^{A}\overleftarrow{\xi_{A}}.$$

Therefore, the vector fields on $U\times G$
\begin{equation}\label{basis}
    e_{i}=\frac{\partial}{\partial
q^{i}}-\mathcal{A}_{i}^{A}\overleftarrow{\xi_{A}}\hbox{ and }
e_{B}=\overleftarrow{\xi_{B}}
\end{equation}
are $G$-invariant under the action
of $G$ over $Q$ and define a local basis
$\{e_{i},e_{B}\}$ on
$\Gamma(\widehat{TQ})=\Gamma(\tau_{T\widehat{Q}\oplus\tilde{\al}}).$
The corresponding local structure functions of
$\tau_{\widehat{TQ}}:\widehat{TQ}\to\widehat{Q}$ are
\begin{eqnarray*}
C_{ij}^{k}&=&C_{iA}^{j}=-C_{Ai}^{j}=C_{AB}^{i}=0,\quad C_{ij}^{A}=-\mathcal{B}_{ij}^{A},\quad C_{iA}^{C}=-C_{Ai}^{C}=c_{AB}^{C}\mathcal{A}_{i}^{B},\\
C_{AB}^{C}&=&c_{AB}^{C},\quad\rho_{i}^{j}=\delta_{ij},\quad\rho_{i}^{A}=\rho_{A}^{i}=\rho_{A}^{B}=0,
\end{eqnarray*} being $\{c_{AB}^{C}\}$ the constant structures of $\al$ with respect to the basis $\{\xi_{A}\}$ (see \cite{LMM} for more details). That is,
$$[e_{i},e_{j}]_{\widehat{TQ}}=-\mathcal{B}_{ij}^{C}e_{C},\quad[e_{i},e_{A}]_{\widehat{TQ}}=c_{AB}^{C}\mathcal{A}_{i}^{B}e_{C},\quad [e_{A},e_{B}]_{\widehat{TQ}}=c_{AB}^{C} e_{C},$$
$$\rho_{\widehat{TQ}}(e_{i})=\frac{\partial}{\partial q^{i}},\quad\rho_{\widehat{TQ}}(e_{A})=0.$$ 

\subsection{Lagrange-Poincaré-Herglotz equations}

On this section, consider the local coordinates $(q^{i},\dot{q}^{i},v^{B})$ on $\widehat{TQ}=TQ/G$ induced by the basis $\{e_{i},e_{B}\}$.

Given a reduced contact Lagrangian
function $\ell:\widehat{TQ}\times \R\to\R$ associated with the Atiyah algebroid
$\widehat{TQ}\to\widehat{Q},$ the Euler-Lagrange equations for
$\ell$ are given by
\begin{align}
\frac{\partial\ell}{\partial q^{j}}-\frac{d}{dt}\left(\frac{\partial\ell}{\partial \dot{q
}^{j}}\right)&=\frac{\partial\ell}{\partial v^{A}}\left(\mathcal{B}_{ij}^{A}\dot{q
}^{i}+c_{DB}^{A}\mathcal{A}_{j}^{B} v^{B}\right) - \frac{\partial\ell}{\partial z}\frac{\partial\ell}{\partial \dot{q
}^{j}} \quad\forall j,\nonumber\\
\frac{d}{dt}\left(\frac{\partial\ell}{\partial v^{B}}\right)&=\frac{\partial\ell}{\partial v^{A}}\left(C_{DB}^{A} v^{D}-c_{DB}^{A}\mathcal{A}_{i}^{D}\dot{q
}^{i}\right) + \frac{\partial\ell}{\partial z}\frac{\partial\ell}{\partial v^{B}}\quad\forall B,\nonumber
\end{align} which are the Lagrange-Poincar\'e-Herglotz equations associated to a $G$-invariant Lagrangian $L:TQ\times \R\to\R$ (see \cite{CeMaRa} and \cite{LMM} for example).

\subsection{Wong's equations}

To illustrate the theory that we have developed in this section, we will
consider {\it Wong's equations}. Wong's equations arise in the dynamics of a charged
particle in a Yang-Mills field and the falling cat theorem (see \cite{montgomery}; and also \cite{CeMaRa} and references therein). Our framework allows to include dissipative forces in these models.

Let $(M, g_{M})$ be a given Riemannian manifold, $G$ be a compact
Lie group with a bi-invariant Riemannian metric $\kappa$ and $\pi
: Q \to M$ be a principal bundle with structure group $G$. Suppose
that ${\mathfrak g}$ is the Lie algebra of $G$, that $\mathcal{A}: TQ \to {\mathfrak
g}$ is a principal connection on $Q$ and that $B: TQ \oplus TQ \to
{\mathfrak g}$ is the curvature of $\mathcal{A}$. If $q \in Q$ then, using the connection $\mathcal{A}$, one may prove that
the tangent space to $Q$ at $q$, $T_{q}Q$, is isomorphic to the
vector space ${\mathfrak g} \oplus T_{\pi(q)}M$. Thus, $\kappa$ and
$g_{M}$ induce a Riemannian metric $g_{Q}$ on $Q$ and we can
consider the kinetic contact energy $L: TQ \times \R \to \R$ associated with
$g_{Q}$. The Lagrangian $L$ is given by
\[
L(v_{q}, z) = \displaystyle \frac{1}{2}( \kappa_{e}(\mathcal{A}(v_{q}),
\mathcal{A}(v_{q})) + (g_{M})_{\pi(q)}((T_{q}\pi)(v_{q}), (T_{q}\pi)(v_{q}))) - \gamma z,
\]
for $v_{q} \in T_{q}Q$, $e$ being the identity element in $G$. It
is clear that $L$ is hyperregular and $G$-invariant.

On the other hand, since the Riemannian metric $g_{Q}$ is also
$G$-invariant, it induces a fiber metric $g_{TQ/G}$ on the
quotient vector bundle $\tau_{Q}|G: TQ/G \to M = Q/G$. The reduced contact
Lagrangian $\ell: TQ/G \times \R \to \R$ is just the kinetic energy of the
fiber metric $g_{TQ/G}$, that is,
\begin{equation}\label{reduced:lagrangian}
    \ell([v_{q}], z) = \displaystyle \frac{1}{2} (\kappa_{e}(\mathcal{A}(v_{q}),
\mathcal{A}(v_{q})) + (g_{M})_{\pi(q)}((T_{q}\pi)(v_{q}), (T_{q}\pi)(v_{q}))) - \gamma z,
\end{equation}
for $v_{q} \in T_{q}Q$.

We have that $\ell$ is hyperregular. In fact, the Legendre
transformation associated with $\ell$ is the map $([v_{q}], z)\mapsto (\flat_{g_{TQ/G}}([v_{q}]),z)$, where $\flat_{g_{TQ/G}}$ is the vector bundle
isomorphism between $TQ/G$ and $T^{*}Q/G$
induced by the fiber metric $g_{TQ/G}$.

Now, we choose a local trivialization of $\pi: Q \to M$ to be $U
\times G$, where $U$ is an open subset of $M$ such that there are
local coordinates $(x^{i})$ on $U$. Suppose that $\{\xi_{A}\}$ is
a basis of ${\mathfrak g}$, that $c_{AB}^{D}$ are the structure
constants of ${\mathfrak g}$ with respect to the basis $\{\xi_{A}\}$,
that $\mathcal{A}_{i}^{A}$ (respectively, $B_{ij}^{A}$) are the components
of $\mathcal{A}$ (respectively, $B$) with respect to the local coordinates
$(x^{i})$ and Lie algebra basis $\{\xi_{A}\}$, and that
\[
\kappa_{e} = \kappa_{AB} \xi^{A} \otimes \xi^{B}, \makebox[.4cm]{}
g_{M} = g_{ij} dx^{i} \otimes dx^{j},
\]
where $\{\xi^{A}\}$ is the dual basis to $\{\xi_{A}\}$. Note that
since $\kappa$ is a bi-invariant metric on $G$, it follows that
\begin{equation}\label{bi-in}
c_{AB}^{D}\kappa_{DE} = c_{AE}^{D}\kappa_{DB}.
\end{equation}
Denote by $\{e_i, e_A\}$ the local basis of $G$-invariant vector
fields on $Q$ given by (\ref{basis}), and by
$(x^i,\dot{x}^i,v^A, z)$ the corresponding local fibred
coordinates on $TQ/G \times \R$. We have that
\begin{equation}\label{l}
\ell(x^i,\dot{x}^i,v^A, z) = \displaystyle \frac{1}{2}
(\kappa_{AB} v^A v^B + g_{ij} \dot{x}^i \dot{x}^j) - \gamma z,
\end{equation}

Thus, the
Hessian matrix of $\ell$, $W_{\ell}$, is
 \[\left(\begin{array}{ll}
g_{ij}&0\\0&\kappa_{AB}\end{array}\right).\]

The Lie algebroid Lagrange-Poincar\'e-Herglotz equations for the contact Lagrangian function $\ell$ are given by
\begin{align}
\frac{\partial g_{im}}{\partial q^{j}}\dot{x}^{i}\dot{x}^{m}-\frac{\partial g_{ij}}{\partial q^{k}}\dot{x}^{k}\dot{x}^{i} - g_{ij}\ddot{x}^{i}&=\kappa_{AB} v^{B}\left(\mathcal{B}_{ij}^{A}\dot{x}^{i}+c_{DB}^{A}\mathcal{A}_{j}^{B}v^{B}\right) + \gamma g_{ij}\dot{x}^{i} \\
\kappa_{AB}\dot{v}^{A}&=\kappa_{AE}v^{E}\left(C_{DB}^{A}v^{D}-c_{DB}^{A}\mathcal{A}_{i}^{D}\dot{x}^{i}\right) - \gamma\kappa_{AB}v^{A}. 
\end{align} 

\subsection{Hamilton-Poincar\'e-Herglotz equations}

Given a reduced contact Hamiltonian function $h:T^{*}Q/G \times \R \rightarrow \R$ associated with the Atiyah algebroid $\widehat{TQ}\to \widehat{Q}$, let $\{e_i, e_A\}$ be the local basis of $G$-invariant vector fields on $Q$ given by \eqref{basis}, and $(q^i,\dot{q}^i,v^A)$ be the corresponding local fibred coordinates on $TQ/G$. Then, denote by $(q^i,p_i,\bar{p}_A)$ the (dual)
coordinates on $T^*Q/G$ and $(q^i,p_i,\bar{p}_A,z)$ the corresponding coordinates on $T^*Q/G \times \R$.

In these coordinates, the contact Hamiltonian equations are given by
\begin{equation*}
    \begin{split}
        & \dot{q}^{i} = \frac{\partial h}{\partial p_{i}}, \quad \dot{p}_{i} = -\frac{\partial h}{\partial q^{i}} + B_{ij}^A \bar{p}_A \frac{\partial h}{\partial p_{j}} - c_{AB}^{C}\mathcal{A}_{i}^{B}\bar{p}_C \frac{\partial h}{\partial \bar{p}_{A}} - p_{i}\frac{\partial h}{\partial z} \\
        & \dot{\bar{p}}_{A} = c_{AB}^{C}\mathcal{A}_{i}^{B}\bar{p}_C \frac{\partial h}{\partial p_{i}} - c_{AB}^{C} \bar{p}_C \frac{\partial h}{\partial \bar{p}_{B}} - \bar{p}_{A}\frac{\partial h}{\partial z}, \quad \dot{z}= p_{i}\frac{\partial h}{\partial p_{i}} + \bar{p}_{A}\frac{\partial h}{\partial \bar{p}_{A}} - h
    \end{split}
\end{equation*}

Given the reduced Lagrangian \eqref{reduced:lagrangian}, the corresponding reduced
Hamiltonian $h: T^{*}Q/G \times \R \to \R$ is given by
\[
h([\alpha_{q}], z) = E_{\ell}(\flat_{g_{TQ/G}}^{-1}[\alpha_{q}], z),
\]
for $\alpha_{q} \in T^{*}_{q}Q$, where $E_{\ell}$ is the Lagrangian energy function \eqref{lagrangian:energy}. In local coordinates,
\begin{equation}\label{h}
h(x^i,p_i,\bar{p}_A, z) = \displaystyle \frac{1}{2} (\kappa^{AB}
\bar{p}_A \bar{p}_B + g^{ij} p_i p_j) + \gamma z,
\end{equation}
where $(\kappa^{AB})$ (respectively, $(g^{ij})$) is the inverse
matrix of $(\kappa_{AB})$ (respectively, $(g_{ij})$). 

The inverse of the Hessian matrix $W_{\ell}$ is
\[\left(\begin{array}{ll}
g^{ij}&0\\0&\kappa^{AB}\end{array}\right).\]

The contact Hamiltonian equations for the contact hamiltonian $h$ are given by
\begin{equation*}
    \begin{split}
        & \dot{q}^{i} =g^{ij}p_j, \quad \dot{p}_{i} = -\frac{1}{2} \frac{\partial
        g^{jk}}{\partial q^i} p_j p_k + B_{ij}^A \bar{p}_A g^{jk}p_{k} - c_{AB}^{C}\mathcal{A}_{i}^{B}\bar{p}_C \kappa^{AB}
        \bar{p}_B - \gamma p_{i}, \\
        & \dot{\bar{p}}_{A} = c_{AB}^{C}\mathcal{A}_{i}^{B}\bar{p}_C g^{ij}p_j - c_{AB}^{C} \bar{p}_C \kappa^{BD}
        \bar{p}_D - \gamma \bar{p}_{A}, \quad \dot{z}= \frac{1}{2} (\kappa^{AB}
\bar{p}_A \bar{p}_B + g^{ij} p_i p_j) - \gamma z.
    \end{split}
\end{equation*}


\section*{Declarations}

\subsection*{Ethical approval}
Not applicable.

\subsection*{Competing interests}
We, the authors declare we do not have any conflicts of interests.

\subsection*{Authors' contributions}
Al authors contributed equally.

\subsection*{Funding}
Alexandre Anahory  Simoes, Leonardo Colombo and Manuel de Leon acknowledge financial support from Grant PID2019-106715GB-C21 funded by MCIN/AEI/ 10.13039/501100011033. Modesto Salgado and Silvia Souto acknowledge financial support of the Ministerio de Ciencia, Innovaci\'on y Universidades (Spain), projects PGC2018-098265-B-C33 and D2021-125515NB-21.

\subsection*{Availability of Data and Materials}
Not applicable.

\end{document}